\newtheorem{theorem}{Theorem}
\newtheorem{proposition}{Proposition}
\colorlet{linkequation}{blue}
\newcommand*{\SavedEqref}{}
\let\SavedEqref\eqref
\renewcommand*{\eqref}[1]{
\begingroup
\hypersetup{
    linkcolor=linkequation,
    linkbordercolor=linkequation,
}
\SavedEqref{#1}%
\endgroup
}
\title{The matching relaxation for a class of generalized set partitioning problems}
\author[1]{Phillippe Samer}
\author[1]{Evellyn Cavalcante}
\author[1]{Sebasti\'{a}n Urrutia}
\author[2]{Johan Oppen}
\affil[1]{\footnotesize Universidade Federal de Minas Gerais\\Av. Ant\^onio Carlos 6627, 31270-010, Belo Horizonte, MG, Brazil \newline {\tt phillippes@gmail.com, evellyn.cavalcante@gmail.com,  surrutia@dcc.ufmg.br}}
\affil[2]{\footnotesize Molde University College\\P.O.box 2110, N-6402, Molde, Norway \newline {\tt johan.oppen@himolde.no} }
\date{May 11, 2018}
\begin{document}

\maketitle

\begin{abstract}
This paper introduces a discrete relaxation for the class of combinatorial optimization problems which can be described by a set partitioning formulation under packing constraints.
We present two combinatorial relaxations based on computing maximum weighted matchings in suitable graphs. Besides providing dual bounds, the relaxations are also used on a variable reduction technique and a matheuristic.
We show how that general method can be tailored to sample applications, and also perform a successful computational evaluation with benchmark instances of a problem in maritime logistics.

{\bf Keywords:} dual bounds, matchings, set partitioning, integer programming.

{\bf Remark:} a preliminary, four-page abstract of this work was presented at the 14th Co\-log\-ne-Twen\-te Workshop on Graphs and Combinatorial Optimization, 2016.

{\bf Acknowledgements:} this work was carried out within the Port-Ship Coordinated Planning project, supported by the Norwegian Research Council under project number 227084/O70. We also thank the three referees whose comments improved our presentation.

\end{abstract}

\section{Introduction}
\label{sec:intro}

Consider the following problem, which the reader might recognize from a range of application domains.
A set of $n$ tasks is to be performed, and the problem solver has to decide for an execution mode to them.
To each individual task corresponds a particular set of acceptable execution modes to perform it. Moreover, each execution mode comprises a well-defined cost and resource usage (possibly many, \emph{e.g.} time, space, tools, workers). Since the finite resources are to be shared among tasks, the problem solver also counts on an oracle capable of determining whether a selection of execution modes to different tasks is \emph{compatible}. Any solution prescribing the assignment of incompatible execution modes is thus rendered infeasible.

We concern the class of problems which, at its core, can be cast as follows. Given sets of acceptable assignments to each individual task, and compatibility information among any selection of assignments, find a compatible setting for the $n$ tasks of minimum total cost. 
To clearly outline our contributions and the scope of our investigation, we first give a precise formulation of the class of problems we concern, which we refer to as the base generalized set partitioning problem (GSPP).

\subsection{The base GSPP}

Let $T$ be the set of tasks, and $R = R_1 \times R_2 \times \ldots \times R_k$ be the set of tuples identifying resource usage, \emph{i.e.} combinations of an option for resource $R_1$, an option for resource $R_2$, and so forth.
Also define $\Omega_i$ as the set of feasible assignments to task $i$: each element in $\Omega_i$ assigns a subset of resources in $R$ (characterizing an execution mode) to complete an individual task.
We denote $\Omega = \cup_{i=1}^n \Omega_i$.
Binary decision variables $\mathbf{y} \in \mathbb{B}^{|\Omega|}$ thus indicate which individual assignments are used in the solution.
Let $c_j \in \mathbb{Q}$ denote the cost of an assignment $y_j$ to an individual task, consisting of a computable function of the total resource usage on that individual assignment alone. 
Finally, the coefficient matrices are as follows.
$A \in \mathbb{B}^{|T|\times|\Omega|}$ associates each column with a single task: $a_{ij}$ is equal to one if $y_j$ refers to an assignment for task $i$; otherwise, it is equal to zero.
$B \in \mathbb{B}^{|R|\times|\Omega|}$ represents resource usage tuples: $b_{rj}$ is one iff the given combination $r \in R$ of individual resources is used in the assignment $y_j$.
Then, we build on the following integer programming (IP) formulation:

\begin{equation}
z = \min \left\lbrace \sum_{j \in \Omega}{c_j y_j} : \textbf{y} \in \mathcal{P}_{\text{gspp}} \cap \mathbb{B}^{|\Omega|}  \right\rbrace,
\label{gspp:obj}
\end{equation}
where $\mathcal{P}_{\text{gspp}}$ denotes the polyhedral region defined by:
\begin{alignat}{2}
& \sum_{j \in \Omega}{a_{ij} y_j = 1}  &  \hspace{2cm} \forall i \in T    \label{gspp:partition} \\
& \sum_{j \in \Omega}{b_{rj} y_j \leq 1}  &  \forall r \in R   \label{gspp:pack} \\
& y_{j} \leq 1  &  \forall j \in \Omega \label{gspp:yUB} \\
& y_{j} \geq 0  &  \forall j \in \Omega \label{gspp:yLB}
\end{alignat}
Set partitioning constraints $\eqref{gspp:partition}$ ensure that all tasks are served by exactly one assignment, while set packing in $\eqref{gspp:pack}$ forbids overlapping of assignments in each resource combination slot.
We note that, if incompatibilities among tasks reduced to pairwise relations, the latter class of inequalities could be replaced by any set packing relaxation, such as \emph{edge inequalities}: $x_u + x_v \leq 1$, for each edge $(u,v)$ in a conflict graph \citep{atamturk2000conflict}.
Constraints $\eqref{gspp:partition}-\eqref{gspp:yLB}$ correspond to the linear relaxation of the binary programming formulation.

This formulation has connections to different disciplines in combinatorial optimization.
It is similar to some variations of the assignment problem \citep{pentico2007assignment}.
It can be seen as the scheduling of jobs on parallel machines minimizing total processing time, as we illustrate in Section \ref{sec:app:scheduling} with a problem studied by \cite{LallaRuizScheduling2016}.
And the formulation is also an instance of the mixed set covering, packing and partitioning problem, investigated by \cite{kuo2016}, following a longer tradition of studying perfect and ideal 0--1 matrices.
The authors perform the first polyhedral investigation of the mixed problem, and argue on its relevance and number of applications, notwithstanding the fact that it has drawn little attention in the literature so far.
For the interested reader, we indicate in Section \ref{sec:relatedwork} more related problems and situate the above GSPP structure in the set partitioning literature.

\subsection{Our contributions}
The number of variables in the previous formulation can be huge, although polynomial in the number of execution modes of tasks.
This is in consonance with the compromise between: (i) the complexity in the representation of each execution mode to a task, \emph{i.e.} the level of discretization, and (ii) the computing time to solve the resulting problem.

The main idea of this paper is to show that one can use interesting, combinatorial constructions over the variables to find lower bounds to $\eqref{gspp:obj}$.
Moreover, since the bounds are purely combinatorial, they may be computed more quickly than bounds based on linear programming (LP) relaxation, though possibly at the expense of being weaker.
As noted by
\cite{ryan1992}, \emph{``The solution of the LP relaxation has always proved to be a computational bottleneck in solving SPPs''}.
Even if the combinatorial bounds prove to be weaker on a given application, the construction itself might be interesting as a building block, \emph{e.g.} in algorithms which depend on the computation of dual bounds, as we illustrate with a variable probing method and a matheuristic in Section~\ref{sec:algorithms}.

To summarize, the contributions of this paper are:
\begin{itemize}

\item
We draw attention to the GSPP structure $\eqref{gspp:obj}-\eqref{gspp:yLB}$ as being interesting in its own right, and list a range of applications of it in the recent literature.
Moreover, the key ideas we present can be extended to fit application-specific details, as we illustrate in Section~\ref{sec:applications}. 

\item In Section \ref{sec:relaxations}, we introduce two lower bounds to problem $\eqref{gspp:obj}$. These lower bounds come from weighted matchings computed on suitable graphs. We also prove that one of the bounds is stronger than the other.

\item From an algorithmic standpoint, the combinatorial lower bounds are embedded in a model-based heuristic framework (or \emph{math}euristic), presented in Section \ref{sec:algorithms}. 
We argue that the efficient bounds introduced are key to this algorithm.

\item A preliminary computational evaluation of that matheuristic, using benchmark instances from a logistics problem, indicate that it is able to find near-optimal solutions in reduced execution time (Section \ref{sec:results}).

\end{itemize}

\vspace{-0.5cm}

\section{On the literature of generalized set partitioning problems}
\label{sec:relatedwork}
In this section, we give pointers to the literature of the standard set partitioning problem first, followed by a brief account of the different generalizations related to the structure we investigate.

\subsection*{The standard set partitioning problem}
The first paragraph in Section 1 of a classic review by \cite{balas1976set} reads: 
\begin{displayquote}
``Among all special structures in (pure) integer programming, there are three which have the most wide-spread applications: set partitioning, set covering and the traveling salesman (or minimum length Hamiltonian cycle) problem; and if we were to rank the three, set partitioning would be a likely candidate for number one.''
\end{displayquote}

We remark that, over forty years later, the set partitioning problem is still a central structure in integer programming, as illustrated by the several interesting applications we mention throughout this paper.
That early work of Balas and Padberg is a thorough introduction to the concepts and problems related to set partitioning, as well as the main polyhedral results and exact algorithms up to that date. While surveying the literature on this fundamental problem is beyond the scope of our paper, we still mention particular works that are more directly related to ours or particularly inspiring.

The awarded thesis by \cite{borndorfer1998Thesis} provides an extended review of polyhedral and algorithmic aspects related to the underlying set packing and covering relaxations. It also documents the components and implementation issues of a branch and cut algorithm for set partitioning, extending in numerous directions the approach of \cite{hoffman1993airline} to an application in airline crew scheduling. Finally, it describes a real world application of vehicle routing in the context of public transportation for the physically disabled, building on a decomposition of the problem in two steps, both of which require optimizing over a set partitioning structure.
In the next two paragraphs, we highlight the remarkable effort of that author in extending the results on polytopes with more substantial body of knowledge to larger classes of problems.

\subsection*{Polyhedra and combinatorics of set partitioning}
\cite{BorndorferWeismantel2001} present an affine transformation technique (\emph{aggregation}) to leverage cutting planes from combinatorial relaxations of an IP formulation.
Their polyhedral investigation stems from a generalization of projection to introduce an algorithmic approach. A so-called \emph{aggregation scheme} allows one to transfer (\emph{expand}) known classes of valid inequalities in a suitable projection space (\emph{e.g.} the polytope of another combinatorial structure) to the original polyhedron representing an IP.
Besides constructing interesting aggregation schemes leading to set packing and knapsack relaxations, the authors describe conditions under which the separation problem for the new classes of valid inequalities can be solved in polynomial time.
Several of those results are further investigated and contrasted within the framework of disjunctive cuts by \cite{Letchford2001}.

\cite{2004BorndorferChapter} also describes the combinatorial packing problem (CPP), investigating a structure closely related to the class of set partitioning problems that we study. The problem consists of solving a number of individual combinatorial optimization problems on the same ground set, such that no element is contained in the solution of more than one problem. A number of interesting problems can be subsumed as CPP examples, including minimum cost flows, steiner trees packing, and the generalized assignment problem.

The long tradition in studying the polyhedra associated with 0--1 matrices encompasses the GSPP structure that we consider.
In particular, \cite{SeboIPCO1998} extended the theory of perfect matrices in set packing formulations, and ideal matrices in set covering, to characterize non-integral polyhedra in the mixed packing and covering problem (assuming the coefficient matrices satisfy those former conditions). 

A central structure in the work of Seb{\H{o}} is the odd-hole graph.
Almost twenty years later, \cite{kuo2016} start from that same graph structure to investigate the mixed set covering, packing and partitioning problem.
They derive the \emph{mixed odd-hole inequality}, and show that its inclusion completely characterizes the polytope corresponding to the mixed problem when the coefficient matrix induces precisely that graph.
As we remarked earlier, the base GSPP formulation that we explore in this work is an instance of that mixed problem.

\subsection*{Formulations and algorithms for generalized set partitioning problems}

There is a solid research track on Lagrangean bounds and algorithms for generalizations of set partitioning.
\cite{nemhauser1979optimal} introduced a reformulation of the classic set partitioning problem as a weighted matching with simple side constraints. They describe a Lagrangean relaxation algorithm, exploiting the particular matching structure to improve a standard nonlinear optimization method to solve dual subproblems.
\cite{ali1989network} also present a Lagrangean approach to solve relaxations within a branch and bound framework. The original problem structure is decomposed into a network-type submatrix (solved efficiently with dual reoptimization), while the remaining constraints are dualized in the objective function, resulting in an integral relaxation.
\cite{shor1996using} further explore Lagrangean bounds to formulate relaxations of set packing/partitioning problems. They even consider an approximation procedure solving a reduced IP model, which is analogous to the \emph{math}euristic we present in Section \ref{sec:matheuristic}, using Lagrangean duals instead of the combinatorial bounds we introduce.
\cite{el1995graph} review and propose a series of combinatorial relaxations for the classic set partitioning formulation. They also suggest that their results can be used in a Lagrangean relaxation approach. 

\cite{linderoth2001} present a parallel algorithm for distributed memory architectures, combining different primal and dual LP-based heuristics with a collection of preprocessing, variable fixing and cut generation methods.

Finally, we mention a few more problems that are closely related to the GSPP structure that we investigate here.
In the work of \cite{Campello1987}, the set partitioning problem is generalized by a single upper bound on positive linear combinations of the variables, instead of the set packing constraints of the GSPP we study. They also consider a different cardinality constraint, and propose a Lagrangean relaxation scheme to determine lower bounds.
\cite{ball1990matching} describe the Lagrangean relaxation of weighted matchings with generalized upper bounds, which can be cast as a particular instance of the set partitioning problem. They present a thorough experimental analysis, considering different approaches to optimize the Lagrangean dual problem, an improvement heuristic, and 
a convergent enumeration procedure to guarantee the optimality of the overall method.
As for \cite{Fisher1990}, the set partitioning structure is generalized by set covering constraints. The resulting problem is also interesting, providing a model for previous applications in the literature, and the authors describe dual heuristics to it.
\cite{chan1992} and \cite{boschetti2008} also present dual heuristics to the problem.
Another interesting Lagrangean approach worth mentioning is that of \cite{CavalcanteCidAbilio2008} to set partitioning. Their sophisticated relax-and-cut algorithm improves on the quality of previous known lower bounds, while also being competitive regarding time efficiency.

\section{Matching relaxations of the GSPP}
\label{sec:relaxations}

In this section, we present the main contributions of the paper.

For the sake of clearness, we highlight from the problem definition in the Introduction three conditions for our key ideas to work. First, we assume that each individual assignment $y_j$ has a cost $c_j$ which is computable independently of the assignments to other tasks.
We also assume that it is possible to determine whether a selection of individual assignments is compatible.
Finally, the GSPP structure we study is limited to formulations with a polynomial number of variables, such that the set $\Omega$ can be enumerated before solving the resulting IP. While this immediately rules out a series of applications (typically solved by column generation algorithms), we hope that our discussions and numerical results in the remainder of the paper  could settle the relevance of the class of problems we concern.

We start by introducing two relaxations for the GSPP formulation, which yield dual bounds to the optimal value of the objective $z$ in $\eqref{gspp:obj}$.
Note that the most na\"ive approach would be to discard all the packing constraints in $\eqref{gspp:pack}$ and simply choose the cheapest individual assignment to each task, which would provide a most trivial lower bound. In the following, we aim to discard less of those constraints.
We construct two simple, undirected graphs, representing a subset of the enumerated assignments.  
Throughout the text, we use the linear map $c: \Omega \rightarrow \mathbb{Q}$ from the space of assignments to their costs, such that $c(y_j) = c_j$.


We define the graph $G_1(T,E_1)$, with a vertex for each task. The set $E_1$ includes an edge $(i,j)$ if the individual assignments of best cost for tasks $i$ and $j$ are not compatible with each other.
Let $c^{\prime}_j$ denote the minimum cost assignment for task $j$; that is, $c^{\prime}_j = \min \{ c(y_j): y_j \in \Omega_j \}$. Analogously, let $c^{\prime\prime}_j$ be the second minimum cost assignment for $j$.
The cost $c_1(i,j)$ of an edge in $G_1$ is defined by the least difference among such costs, for the corresponding tasks $i$ and $j$. That is: $c_1(i,j) \triangleq \min \{(c^{\prime\prime}_i - c^{\prime}_i), (c^{\prime\prime}_j - c^{\prime}_j) \}$.
Then, the following bound on the cost of any feasible solution holds.

\begin{theorem}
\label{theorem:lb1}
Let $M \subseteq E_1$ denote a maximum weighted matching in $G_1$, and $w(M) = \sum_{e \in M} c_1(e)$ be its weight. Then $LB_1 \triangleq w(M) + \sum_{j \in T} c^{\prime}_j $ is a lower bound to the optimal value $z$ in $\eqref{gspp:obj}$.
\end{theorem}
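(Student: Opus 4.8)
The plan is to lower-bound the objective value of an \emph{arbitrary} feasible solution by $LB_1$. Fix any $\mathbf{y} \in \mathcal{P}_{\text{gspp}} \cap \mathbb{B}^{|\Omega|}$. By the partitioning constraints $\eqref{gspp:partition}$, $\mathbf{y}$ selects exactly one assignment from each $\Omega_j$; call its cost $\gamma_j$, so that the objective value equals $\sum_{j \in T} \gamma_j$. Since $c^{\prime}_j$ is the minimum cost over $\Omega_j$, we have $\gamma_j \geq c^{\prime}_j$, and I would rewrite the objective value as $\sum_{j \in T} c^{\prime}_j + \sum_{j \in T} (\gamma_j - c^{\prime}_j)$ with every term of the second sum nonnegative. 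It thus suffices to show that the total ``excess'' $\sum_{j \in T}(\gamma_j - c^{\prime}_j)$ is at least $w(M)$.

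The core step is a local, per-edge argument. Take $(i,j) \in M$. By definition of $E_1$, the best-cost assignments of tasks $i$ and $j$ are incompatible, hence cannot both appear in $\mathbf{y}$ --- this is exactly what a packing constraint in $\eqref{gspp:pack}$ forbids, since incompatibility means the two assignments share a resource tuple $r \in R$. Consequently $\mathbf{y}$ assigns a non-best assignment to at least one endpoint, say to task $i$; then $\gamma_i$ is the cost of some element of $\Omega_i$ other than the cheapest one, so $\gamma_i \geq c^{\prime\prime}_i$, and therefore $\gamma_i - c^{\prime}_i \geq c^{\prime\prime}_i - c^{\prime}_i \geq \min\{c^{\prime\prime}_i - c^{\prime}_i,\ c^{\prime\prime}_j - c^{\prime}_j\} = c_1(i,j)$. (In the degenerate case $|\Omega_j| = 1$ one reads $c^{\prime\prime}_j - c^{\prime}_j = +\infty$, so the inequality still holds for whichever endpoint is forced off its best assignment.)

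Finally, I would aggregate these bounds using that $M$ is a matching. To each $(i,j) \in M$ associate one endpoint witnessing the inequality above; because the edges of $M$ are pairwise vertex-disjoint, distinct edges are charged to distinct tasks, so no task's excess is counted twice. Summing the per-edge inequalities over $M$, and bounding $\gamma_j - c^{\prime}_j \geq 0$ for the tasks not covered by $M$, yields $\sum_{j \in T}(\gamma_j - c^{\prime}_j) \geq \sum_{e \in M} c_1(e) = w(M)$. Hence the objective value of $\mathbf{y}$ is at least $\sum_{j \in T} c^{\prime}_j + w(M) = LB_1$, and since $\mathbf{y}$ was arbitrary, $z \geq LB_1$.

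I expect the only delicate point to be the bookkeeping in this charging step: one must ensure a single task's excess never pays for two matched edges, which is precisely where vertex-disjointness of $M$ is used. Everything else follows directly from the definitions of $c^{\prime}_j$ and $c^{\prime\prime}_j$ and from the fact that ``incompatible'' assignments are excluded by $\eqref{gspp:pack}$; note in particular that the argument never needs the full strength of the packing constraints, only that incompatible pairs cannot coexist.
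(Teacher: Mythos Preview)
Your proof is correct and follows essentially the same approach as the paper: both argue that for each matched edge at least one endpoint must abandon its cheapest assignment, incurring an excess of at least $c_1(i,j)$, and both use vertex-disjointness of the matching to sum these excesses without double-counting. Your version is somewhat more explicit in the charging argument (fixing a feasible $\mathbf{y}$ and tracking the per-task excess $\gamma_j - c'_j$), whereas the paper phrases the same idea more informally as a ``required cost increase'' over the trivial bound, but the logic is identical.
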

\begin{proof}
The selection of the best individual assignments for each task corresponds to relaxing all the constraints in $\eqref{gspp:pack}$. Therefore, this is a trivial lower bound to the cost of any feasible solution, and amounts to $\sum_{j \in T} c^{\prime}_j$.

Starting with the trivial selection of best individual assignments, 
and remembering that $(i,j) \in E_1$ iff such best assignments for $i$ and $j$ are not compatible,
the weight of edge $(i,j)$ corresponds to the minimum cost increase due to exchanging one such assignment for the second best. Clearly, this new pair of assignments for tasks $i$ and $j$ can still be infeasible, but the sum of their costs is a lower bound to the cost of any compatible assignment for these tasks.

Note that we cannot imply that the accumulated costs of edges incident to a given vertex $i$ are necessary, because the graph does not provide information about which of the extremes of an edge $(i,j)$ assumes the second best assignment,
\emph{i.e.} whether $c_1(i,j) = (c^{\prime\prime}_i - c^{\prime}_i)$ or $c_1(i,j) = (c^{\prime\prime}_j - c^{\prime}_j)$.
It is even possible that, following such an exchange for a second-best assignment, other edges would not even exist in $G_1$.
However, one can consider any matching in $G_1$, corresponding to disjoint pairs of tasks whose best assignments are not compatible. Therefore, the weight of any matching is a required cost increase over $\sum_{j \in T} c^{\prime}_j$, implied by the pairwise incompatibility of the corresponding individual assignments. In particular, a maximum weighted matching corresponds to the strongest such bound in $G_1$.
\end{proof}


Our second dual bound strengthens the information on the cost of compatible assignments between pairs of tasks. 
Let $G_2(T,E_2)$ denote a complete graph, with a vertex for each task. 
Although the following result holds for any number of tasks, it would be unnecessarily weaker for odd $|T|$ because the new bound amounts solely to the weight of a matching in this new graph, and some vertex would not be covered. To circumvent this, in the case that $|T|$ is odd, we simply add to $G_2$ an artificial vertex $s$, with edges to every other vertex $i$, with costs $c_2(s,i) = \min \{ c(y_i) : y_i \in \Omega_i \}$.
The remainder of the paper thus assumes that $G_2$ and the corresponding bound are defined over an even number of vertices.

Define the cost $c_2(i,j)$ of an edge in $E_2$ as the cheapest compatible assignments for tasks $i$ and $j$, that is:
$c_2(i,j) \triangleq \min \{ c(y_i) + c(y_j): y_i \in \Omega_i, y_j \in \Omega_j, y_i \text{ and } y_j \text{ are compatible} \}$.  Then, we have the following result.

\begin{theorem}
\label{theorem:lb2}
Let $M \subseteq E_2$ be a maximum weighted matching in $G_2$. Then,
$LB_2 \triangleq \sum_{e \in M} c_2(e)$ is a lower bound to the optimal value $z$ in $\eqref{gspp:obj}$.
\end{theorem}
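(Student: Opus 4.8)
The plan is to show the stronger statement that \emph{every} perfect matching $N$ of $G_2$ satisfies $\sum_{e \in N} c_2(e) \le \sum_{j \in \Omega} c_j y_j$ for every feasible solution $\mathbf{y}$ of $\eqref{gspp:obj}$; taking $N = M$ and minimising over feasible $\mathbf{y}$ then gives $LB_2 \le z$ at once. (If the GSPP is infeasible, $z = +\infty$ and there is nothing to prove, so assume feasibility throughout; this also guarantees that each $c_2(i,j)$ is finite, since any feasible solution exhibits a compatible pair of assignments for tasks $i$ and $j$.)

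First I would unpack what a feasible solution is. By the partition constraints $\eqref{gspp:partition}$, a feasible $\mathbf{y}$ picks exactly one assignment $y_i^\star \in \Omega_i$ for every task $i \in T$; by the packing constraints $\eqref{gspp:pack}$, these picks are pairwise compatible — here I would note that a resource combination $r$ used by two assignments is already witnessed by that pair alone, so satisfiability of $\eqref{gspp:pack}$ by $\mathbf{y}$ is equivalent to pairwise compatibility of the selected assignments. Consequently, for any two tasks $i,j$ the pair $(y_i^\star,y_j^\star)$ is an admissible candidate in the minimisation defining $c_2(i,j)$, so $c_2(i,j) \le c(y_i^\star) + c(y_j^\star)$, and likewise $c_2(s,i) = c'_i \le c(y_i^\star)$ for the artificial vertex $s$.

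Next I would sum this edge inequality over a perfect matching $N$ of $G_2$. When $|T|$ is even, $V(G_2)=T$ and every vertex is covered exactly once, so the right-hand side telescopes: $\sum_{(i,j) \in N} \bigl( c(y_i^\star) + c(y_j^\star) \bigr) = \sum_{i \in T} c(y_i^\star)$, which is precisely the objective value of $\mathbf{y}$; hence $\sum_{e \in N} c_2(e) \le \sum_{j \in \Omega} c_j y_j$. When $|T|$ is odd, $N$ matches $s$ to some task $k$ and pairs up the remaining $|T|-1$ tasks; bounding the edge at $s$ by $c'_k \le c(y_k^\star)$ and the rest as before yields the same conclusion. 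Applying this with $N = M$ and taking the infimum over feasible $\mathbf{y}$ proves $LB_2 = \sum_{e \in M} c_2(e) \le z$.

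The step needing the most care is the passage from ``every perfect matching has weight at most the cost of any feasible solution'' to the claim about the \emph{maximum} weighted matching $M$: this is immediate once $M$ is itself a perfect matching, which is exactly why $G_2$ is deliberately built on an even number of vertices (introducing $s$ when $|T|$ is odd). If assignment costs were allowed to be negative, the maximum weighted matching need not be perfect and the telescoping identity breaks down, so I would make explicit the (natural) assumption that the $c_j$ are nonnegative — equivalently that $M$ ranges over perfect matchings — under which the optimal matching can be taken perfect without loss. The remaining ingredients are routine: the edgewise bound $c_2(i,j) \le c(y_i^\star)+c(y_j^\star)$ and the bookkeeping that a perfect matching covers each vertex exactly once.
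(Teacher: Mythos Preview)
Your argument is correct and follows the same line as the paper's proof: both hinge on the observation that a feasible $\mathbf{y}$ selects, for every pair of tasks $i,j$, compatible assignments $y_i^\star, y_j^\star$, whence $c_2(i,j) \le c(y_i^\star)+c(y_j^\star)$; summing over the edges of a matching and using that each task is counted once yields the bound. The paper states this more informally (``the weight of any matching in $G_2$ is a lower bound \ldots\ a maximum weighted matching thus provides the strongest such bound''), whereas you make the telescoping explicit via perfect matchings and flag the one genuine subtlety the paper glosses over: that the maximum weighted matching need not be perfect unless edge weights are nonnegative, so one should either assume $c_j \ge 0$ or read ``maximum weighted matching'' as ``maximum weight perfect matching''. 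That caveat is a useful addition, not a deviation in method.
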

\begin{proof}
The weight of a single edge $(i,j) \in E_2$ is the sum of the minimum cost assignments for tasks $i$ and $j$, complying with the compatibility constraints \emph{among them}. That is: these two assignments alone are compatible. A selection of edges not sharing a vertex (\emph{i.e.} a matching) thus corresponds to pairing up tasks and determining their best compatible assignments, which is required in any solution satisfying packing constraints $\eqref{gspp:pack}$.
Therefore, the weight of any matching in $G_2$ is a lower bound to the cost of a feasible solution, since this clearly relaxes constraints regarding the compatibility of unpaired tasks. A maximum weighted matching thus provides the strongest such bound in $G_2$.
\end{proof}



We conclude with a result on the relative strength of the bounds obtained in the two relaxations.
Note that, in the simple case where all the best individual assignments are pairwise compatible with each other, we verify:
(i) the graph $G_1$ has no edges, and the bound $LB_1$ corresponds to trivial bound $\sum_{j \in T} c^{\prime}_j$ ;
(ii) any perfect matching $M$ in the graph $G_2$ has maximum weight, amounting to the sum of costs of the best individual assignments.
Therefore, the bounds are equal: $LB_2 = \sum_{(i,j) \in M} c_2(i,j) = \sum_{(i,j) \in M} (c^{\prime}_i + c^{\prime}_j) = \sum_{u \in T} c^{\prime}_u = LB_1$.
We show below that the second bound is actually stronger than the first.

\begin{theorem}
\label{theorem:strength}
The lower bound attained from graph $G_2(T,E_2)$ is stronger than that from graph $G_1(T,E_1)$; \emph{i.e.} for any given problem instance, $LB_2 \geq LB_1$ holds, and $LB_2 > LB_1$ for at least one case.
\end{theorem}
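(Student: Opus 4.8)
The plan is to prove $LB_2\ge LB_1$ by exhibiting, from the matching that attains $LB_1$, a single perfect matching of $G_2$ whose weight already reaches $LB_1$; since $LB_2$ is the \emph{maximum} weight of a perfect matching of $G_2$, this suffices. Concretely, let $M_1\subseteq E_1$ be a maximum weighted matching of $G_1$, and let $\widehat M\subseteq E_2$ be any perfect matching of $G_2$ containing $M_1$. Such a $\widehat M$ exists because $G_2$ is complete on an even vertex set, so $M_1$ extends arbitrarily over the vertices it leaves uncovered; when $|T|$ is odd, I would first pair the artificial vertex $s$ with an $M_1$-uncovered task and then complete the matching, which, recalling $c_2(s,i)=c'_i$, is tracked cleanly by setting $c'_s=0$. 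I would then bound $w_2(\widehat M)$ edge by edge.

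There are two cases for an edge $(i,j)\in\widehat M$. If $(i,j)\notin M_1$, then $c_2(i,j)\ge c'_i+c'_j$ holds trivially, since a compatible pair for $i$ and $j$ costs at least the sum of the two individual minima. If $(i,j)\in M_1\subseteq E_1$, then by definition of $E_1$ the cheapest assignments of $i$ and $j$ are mutually incompatible, so any compatible pair must use for task $i$ an assignment of cost at least $c''_i$, or for task $j$ an assignment of cost at least $c''_j$; since $c''_i-c'_i\ge c_1(i,j)$ and $c''_j-c'_j\ge c_1(i,j)$ (each gap dominates their minimum), in either situation the pair costs at least $c'_i+c'_j+c_1(i,j)$, hence $c_2(i,j)\ge c'_i+c'_j+c_1(i,j)$. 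Summing over $\widehat M$ and using that $\widehat M$ covers every vertex exactly once,
\[
LB_2 \;\ge\; w_2(\widehat M) \;\ge\; \sum_{(i,j)\in\widehat M}\big(c'_i+c'_j\big) \;+\; \sum_{(i,j)\in M_1} c_1(i,j) \;=\; \sum_{u\in T} c'_u \;+\; w(M_1) \;=\; LB_1 .
\]

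For the strict part it is enough to display one instance with $LB_2>LB_1$, and I would take the smallest interesting case: two tasks whose best assignments are incompatible and such that the \emph{only} compatible pair uses the second-best assignment on \emph{both} tasks. For instance, let $\Omega_1$ and $\Omega_2$ each consist of an assignment of cost $0$ and one of cost $1$, let the two cost-$0$ assignments be incompatible, and let the two ``mixed'' pairs also be incompatible, so that only the pair of cost-$1$ assignments is compatible. Then $G_1$ has the single edge $(1,2)$ with $c_1(1,2)=\min\{1-0,\,1-0\}=1$, giving $LB_1 = 1 + (0+0) = 1$, whereas $c_2(1,2)=1+1=2$, so $LB_2 = 2 > 1 = LB_1$.

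The only genuine subtlety I foresee is bookkeeping rather than ideas: ensuring the chosen perfect matching $\widehat M$ of $G_2$ really contains $M_1$ (routine, by completeness of $G_2$) and handling the odd-$|T|$ artificial vertex so that the telescoping identity $\sum_{(i,j)\in\widehat M}(c'_i+c'_j)=\sum_{u\in T}c'_u$ still holds. The heart of the argument is the per-edge inequality in case (b) — that an incompatibility between the two cheapest assignments of a pair forces a cost increase of at least $c_1(i,j)$ on the cheapest compatible pair — which is precisely the quantity the edge weights of $G_1$ were designed to under-estimate, and this is why the inequality comes out in the correct direction.
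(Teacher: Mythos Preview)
Your proof is correct and follows essentially the same approach as the paper: establish the per-edge inequality $c_2(i,j)\ge c'_i+c'_j+c_1(i,j)$ for edges of $E_1$, extend the maximum matching $M_1$ of $G_1$ to a perfect matching of $G_2$, and sum. Your presentation is in fact slightly cleaner than the paper's --- you argue directly rather than by contradiction, and by using the trivial bound $c_2(i,j)\ge c'_i+c'_j$ on the extension edges you avoid the paper's case split on whether $M_1$ is perfect (and its accompanying claim that any two $M_1$-uncovered vertices cannot be joined in $E_1$).
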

\begin{proof}
First, we remark that specific cases where $LB_2 > LB_1$ are intuitive. It suffices to have a pair of vertices for which there are no compatible assignments employing the cheapest execution mode for one of them.
In the following, we suppose there is an instance where $LB_1 > LB_2$. We build a matching in $G_2$ with cost at least $LB_1$, showing that the hypothesis is absurd.

For any edge $(i,j) \in E_1$, we can compare the cost functions in $G_1$ and $G_2$; recall that the latter graph is complete.
By definition, $c_1(i,j)$ corresponds to the minimum cost increase implied by the incompatibility of the best individual assignments for $i$ and $j$, while $c_2(i,j)$ corresponds to the actual sum of the costs of the best compatible assignments.
It follows that:
\begin{equation}
c_2(i,j) \geq c^{\prime}_i + c^{\prime}_j + c_1(i,j) \label{eq:prova3:custos}
\end{equation}

Let $M_1 \subseteq E_1$ be a maximum weighted matching in $G_1$, which thus yields the lower bound $LB_1$ from that graph.
We define the analogous set of edges in $G_2$ as $M_2 = \{ (i,j) \in E_2 : \text{there exists the edge } (i,j) \in M_1 \}$.
The set $M_2$ is a matching in $G_2$, by construction.

We distinguish two cases. If $M_1$ is perfect, then $M_2$ is perfect as well since both graphs have the same vertex set. We can infer about their weights:
\begin{equation}
w(M_2) = \sum_{(i,j) \in M_2} c_2(i,j) \geq \sum_{(i,j) \in M_2} {(c^{\prime}_i + c^{\prime}_j + c_1(i,j))} = \sum_{(i,j) \in M_1} {c_1(i,j)} + \sum_{u \in T} c^{\prime}_u = LB_1, \label{eq:prova3:perfeito}
\end{equation}
where the first inequality holds by $\eqref{eq:prova3:custos}$, and the second equality is true because the matchings are perfect.

If $M_1$ is not perfect, there are pairs of vertices $(x,y)$ not covered by $M_1$.
By hypothesis, $M_1$ has maximum weight; hence $(x,y) \not\in E_1$, \emph{i.e.} the individual assignments of least cost for $x$ and $y$ are compatible.
Therefore, the edge in $G_2$ corresponding to each such pair $(x,y)$ has cost $c_2(x,y) = c^{\prime}_x + c^{\prime}_y$.
We can extend $M_2$ to a perfect matching $M_2^{\prime}$ in $G_2$ by arbitrarily connecting pairs of vertices not yet covered by $M_2$.
Let $C$ denote the set of edges selected this way, such that $M_2^{\prime} \triangleq M_2 \cup C$, and $\sum_{(x,y) \in C} c_2(x,y) = \sum_{(x,y) \in C} c^{\prime}_x + c^{\prime}_y$.
Then, analogously to the previous case, we have:
\begin{align}
w(M_2^{\prime}) &= \sum_{(i,j) \in M_2} c_2(i,j) + \sum_{(x,y) \in C} c_2(x,y) \nonumber \\
 & \geq \sum_{(i,j) \in M_2} {(c^{\prime}_i + c^{\prime}_j + c_1(i,j))}  + \sum_{(x,y) \in C} c_2(x,y) \nonumber \\
 &= \sum_{(i,j) \in M_2} {c_1(i,j)} + \sum_{u \in T} c^{\prime}_u \nonumber \\
 &= LB_1, \label{eq:prova3:naoperfeito}
\end{align}
where the last equalities hold because $M_2^{\prime}$ covers all vertices.

Therefore, the matchings in $G_2$ built in both cases $\eqref{eq:prova3:perfeito}$ and $\eqref{eq:prova3:naoperfeito}$ have weight at least $LB_1$, providing a lower bound on $LB_2$, which is defined as the maximum weight of a matching in $G_2$.
Since we start with a general input instance, the hypothesis that $LB_1 > LB_2$ could hold is absurd, and we always verify that $LB_2 \geq LB_1$.
\end{proof}

\section{Embedding the relaxation in a matheuristic algorithm}
\label{sec:algorithms}

This section extends our key idea, the matching relaxation of the GSPP structure, into algorithmic results.
First, we derive in Section \ref{sec:probing} a preprocessing method to probe and discard assignments that imply a suboptimal solution, as it is done in the work of \cite{iris2015} in the context of a port logistics problem.
Next, we present in Section \ref{sec:matheuristic} a matheuristic algorithm to find approximate solutions to the problem in reduced computational time.

\subsection{Preprocessing method for variable reduction}
\label{sec:probing}

The results in the previous section yield lower bounds on the optimal value $z$ of problem $\eqref{gspp:obj}$, but can also be extended to a preprocessing method. The goal is to fix at null value (or, equivalently, remove) a number of decision variables in the resulting model after the enumeration of feasible assignments, while preserving any optimal solution.

It is worth remarking that, since this technique is applied prior to the model optimization, such a proposal can be integrated with any approach based on enumerating the variables of the GSPP formulation and solving the resulting model with an integer linear programming algorithm.
This strategy has already been adopted by \cite{iris2015}, using lower bounds implied by probing the selection of a single assignment or a pair of assignments for two different tasks.

The next result assumes that an upper bound to $z$ is available. 
First, we temporarily assume that a given assignment $y_k \in \Omega_k$ is fixed in the solution.
We define the complete graph $G_{2,k}(T\backslash\{k\}, E_{2,k})$.
We proceed exactly as in the construction of $G_2$ towards Theorem $\eqref{theorem:lb2}$ to ensure that $G_{2,k}$ has an even number of vertices.
The corresponding edge costs $c_{2,k}$ regard the best compatible assignments for two given tasks, which are also compatible with $y_k$. That is:
\begin{align}
c_{2,k}(i,j) = \min \{ & c(y_i) + c(y_j): y_i \in \Omega_i, y_j \in \Omega_j, \nonumber \\
                       & y_i \text{ and } y_j \text{ are compatible with each other and with } y_k \} \nonumber
\end{align}
Finally, we evaluate the increase on the lower bound $LB_2$ implied by fixing the assignment $y_k$,
after computing a maximum weighted matching as presented in Theorem $\eqref{theorem:lb2}$: if the new lower bound exceeds a known upper bound, we conclude that this assignment cannot be part of an optimal solution.
The result is summarized as follows.

\begin{proposition}
\label{prop:probing}
Let $LB_{2,k}$ denote the lower bound from Theorem \ref{theorem:lb2} determined over $G_{2,k}$.
Given any upper bound $UB$ to $z$ in $\eqref{gspp:obj}$, if $c(y_k) + LB_{2,k} > UB$, then there is no optimal solution which includes the assignment $y_k \in \Omega_k$, and the corresponding variable can be removed from the model.
\end{proposition}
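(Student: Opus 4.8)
The plan is to show that if the assignment $y_k$ is fixed in any feasible solution, then the cost of that solution is at least $c(y_k) + LB_{2,k}$; combined with the hypothesis $c(y_k) + LB_{2,k} > UB \geq z$, this rules out $y_k$ from every optimal solution. So the real work is to argue that $LB_{2,k}$ is a valid lower bound on the cost of the \emph{remaining} tasks $T \setminus \{k\}$ in any feasible solution that uses $y_k$.

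First I would fix a feasible solution $\mathbf{y}$ of $\eqref{gspp:obj}$ with $y_k = 1$, and decompose its cost as $c(y_k)$ plus the sum of the costs of the assignments chosen for the tasks in $T \setminus \{k\}$. Because $\mathbf{y}$ satisfies the packing constraints $\eqref{gspp:pack}$, all assignments it selects are pairwise compatible; in particular, for each task $i \in T \setminus \{k\}$ the selected assignment is compatible with $y_k$, and for any two tasks $i,j \in T \setminus \{k\}$ the two selected assignments are compatible with each other \emph{and} with $y_k$. This is exactly the feasibility region over which the edge costs $c_{2,k}$ are minimized. Hence, mimicking the proof of Theorem \ref{theorem:lb2}: pairing up the tasks of $T \setminus \{k\}$ (using the artificial vertex $s$ if $|T \setminus \{k\}|$ is odd, just as in the construction of $G_2$) gives a perfect matching in $G_{2,k}$ whose weight, by definition of $c_{2,k}$ as a minimum over \emph{all} compatible-with-$y_k$ pairs, is at most the cost of the assignments $\mathbf{y}$ actually uses for those tasks. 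Therefore the cost of $\mathbf{y}$ restricted to $T \setminus \{k\}$ is at least the weight of some matching in $G_{2,k}$, which is at least $LB_{2,k}$, the maximum such weight.

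Putting the pieces together, any feasible solution with $y_k = 1$ costs at least $c(y_k) + LB_{2,k}$. Since $UB \geq z$ and every optimal solution attains $z$, if $c(y_k) + LB_{2,k} > UB$ then no optimal solution can have $y_k = 1$, so the variable $y_k$ may be fixed to zero (removed) without changing the optimal value. One should note for completeness the degenerate situation in which $\Omega_k$ or one of the relevant compatibility sets is empty, so that some $c_{2,k}(i,j)$ or even $LB_{2,k}$ is $+\infty$; in that case either $y_k$ participates in no feasible solution at all, or the inequality holds trivially, and the conclusion stands.

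I do not expect a genuine obstacle here, as the argument is essentially a conditioned rerun of Theorem \ref{theorem:lb2}. The one point requiring a little care is the odd-cardinality bookkeeping: one must check that augmenting $G_{2,k}$ with the artificial vertex $s$ and edges of cost $c_{2,k}(s,i) = \min\{c(y_i) : y_i \in \Omega_i, \ y_i \text{ compatible with } y_k\}$ still yields a lower bound, exactly as the analogous construction does for $G_2$ in Theorem \ref{theorem:lb2} — this is immediate since selecting the cheapest $y_k$-compatible assignment for the unmatched task is a lower bound on its contribution. A secondary subtlety is simply being explicit that $UB \geq z$, so that exceeding $UB$ indeed certifies suboptimality rather than merely infeasibility of a fixed choice.
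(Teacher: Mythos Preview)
Your approach is exactly what the paper intends: the proposition is stated there without a formal proof, as an immediate consequence of rerunning Theorem~\ref{theorem:lb2} on the conditioned graph $G_{2,k}$, and your write-up makes that reasoning explicit.

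There is, however, one slip in the chain of inequalities. You exhibit \emph{one} perfect matching in $G_{2,k}$ (an arbitrary pairing of $T\setminus\{k\}$) whose weight is at most the cost that $\mathbf{y}$ spends on those tasks, and then write that this cost ``is at least the weight of some matching in $G_{2,k}$, which is at least $LB_{2,k}$, the maximum such weight.'' But $LB_{2,k}$ is the \emph{maximum} matching weight, so from ``cost $\geq$ weight of some particular matching'' you cannot conclude ``cost $\geq LB_{2,k}$''; the inequality points the wrong way. The fix is immediate and you already have the ingredient: the edge-wise bound $c_{2,k}(i,j)\le c(y_i^{\ast})+c(y_j^{\ast})$ (with $y_i^{\ast},y_j^{\ast}$ the assignments actually selected by $\mathbf{y}$, which are compatible with each other and with $y_k$) holds for \emph{every} edge of $G_{2,k}$. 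Summing over the edges of \emph{any} perfect matching --- in particular the maximum-weight one --- gives $LB_{2,k}\le\sum_{i\in T\setminus\{k\}} c(y_i^{\ast})$, which is what you need. With this quantifier corrected, the remainder of your argument (including the odd-cardinality bookkeeping and the degenerate-case remark) is sound and coincides with the paper's reasoning.
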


Therefore, we have an iterative algorithm for removing unnecessary variables in the model, while preserving all optimal solutions of the problem. For each feasible assignment in the GSPP formulation, one need only evaluate the new lower bound as depicted above.

Note that an analogous method could be derived from Theorem \ref{theorem:lb1}. Nevertheless, it follows immediately from Theorem \ref{theorem:strength} that it cannot be stronger, \emph{i.e.} it cannot remove a variable which the result in Proposition \ref{prop:probing} does not.

\subsection{Combinatorial ranking matheuristic}
\label{sec:matheuristic}

We introduce next an algorithm belonging to the class of matheuristics, or model-based heuristics, which integrate heuristics and ma\-the\-ma\-ti\-cal pro\-gram\-ming methods \citep{matheuristics2009book, itor2015issue}.
Specifically, we employ the combinatorial relaxation bound 
$LB_{2,k}$ computed in Proposition \ref{prop:probing}
to obtain a reduced model, which is optimized next with an integer linear programming solver.
We remark that previous strategies in the matheuristics literature include solving a reduced model, \emph{e.g.} after the heuristic removal of variables \citep{FanjulPeyro2011, stefanello2015}. 
Even in the context of a subproblem of the logistics application that we use to illustrate our ideas (see Section \ref{sec:app:bacap}), \cite{Mauri2008} present an evolutionary approach to generate columns using dual values in the LP relaxation as a fitness measure,
and \cite{LallaRuizBAP2016} employ an exact solver to partially optimize components of a previous solution using the \textsc{popmusic} metaheuristic.
Also, the early work of \cite{shor1996using} already describes the optimization of a reduced model for the set packing/partitioning problems using Lagrangean bounds to rank variables.

First, note that every solution to the GSPP formulation consists of only $|T| \ll |\Omega| $ assignments.
One could wonder if there would be a fast method for classifying variables, such that high quality solutions could be consistently achieved using only a fraction of the best ranked variables.
In this context, we discard the optimality certificate, and seek a high quality solution in reduced computation time.

The core of the method we propose is depicted in Algorithm \ref{alg:matheuristic}, which ranks and selects a subset $\Omega_F$ of variables from the GSPP formulation. The selection builds on the combinatorial bound $LB_{2,k}$ from Proposition \ref{prop:probing}, denoted by $\Delta(k)$ in the algorithm, and computed on the loop starting at line \ref{code:delta}.
The set of selected variables $\Omega_F$ corresponds to a subset of the polyhedron $\mathcal{P}_{\text{gspp}}$, and optimizing over it provides an upper bound $\bar{z} \geq z$ to the original problem $\eqref{gspp:obj}$. As we indicate in preliminary computational results (Section \ref{sec:results}), this bound can match the optimal value for benchmark instances of an application in port logistics even when using a relatively small fraction of variables.

The algorithm parameters are as follows.
\begin{description}
    \item[$\sigma$] : the least fraction of the best ranked variables to include in the final model;
    \item[$\mu$] : the least number of variables corresponding to each task, which the algorithm should ensure (when available) in the final model.
\end{description}

The latter parameter $\mu$ ensures that each task has a number of assignment options (as selected in the loop starting at line  \ref{code:mu}), while $\sigma$ controls the selection of variables among those implying the best dual bounds (loop starting at line \ref{code:sigma}).
It would be natural to consider algorithm variations, \emph{e.g.} selecting an exact number of variables, or performing a statistical study of the parameters. Both tasks could be approached in future work.\\


\begin{algorithm}[ht]
\small
\setstretch{1.35}
\DontPrintSemicolon
\SetKwInOut{Input}{Input}
\SetKwInOut{Output}{Output}
\SetKwInOut{Param}{Parameters}

\Input{initial set of variables $\Omega$, number of tasks $|T|$}
\Output{set $\Omega_F$ of variables selected for the final model}
\Param{least fraction $\sigma$ of the best ranked variables, least number of variables per task $\mu$}
\BlankLine

$ \Omega_F \leftarrow \varnothing$ \;
$ \Delta(k) \leftarrow 0$ for each $y_k \in \Omega$ \;

\ForEach{variable $y_k \in \Omega$}{   \label{code:delta}

    Let $G_{2,k}$ be the graph of compatible assignments, defined in Proposition \ref{prop:probing} \;
    Let $M$ be a maximum weighted matching in $G_{2,k}$ \;
    \tcp{lower bound implied by using this variable; see Prop.\ref{prop:probing} }
    $\Delta(k) \leftarrow c(y_k) + \sum_{e \in M} c_2(e)$ \;
}

Let $L$ denote the list of variables in $\Omega$, ordered by increasing values of  $\Delta$ \;

\While{$|\Omega_F|/|\Omega| < \sigma $}{   \label{code:sigma}

    Let $d$ be the least $\Delta$ value of a variable in $L$ \;
    \ForEach{variable $y_k \in L$ with $\Delta(k) = d$}{
        $\Omega_F \leftarrow \Omega_F \cup \{ y_k \}$ \;
        $L \leftarrow L \backslash \{ y_k \}$ \;
    }
}

\ForEach{task $i = 1, \ldots, |T|$}{   \label{code:mu}
    \While{$|\{ y \in \Omega_F : y \text{ is an assignment for } i \}| < \mu $ $\mathbf{and}$ $L$ contains a variable referring to $i$}{
        Let $y_i \in L$ be a variable referring to $i$, of least $\Delta$ \;
            $\Omega_F \leftarrow \Omega_F \cup \{ y_i \}$ \;
            $L \leftarrow L \backslash \{ y_i \}$ \;
    }
}

\Return{$\Omega_F$}

\caption{combinatorial ranking \label{alg:matheuristic}}
\end{algorithm}

Now, the complete \emph{math}euristic algorithm is defined, and we summarize it in the following three steps:
\begin{description}
\item[1.] enumerate the initial set of individual assignments $\Omega$ from the problem instance;
\item[2.] execute Algorithm \ref{alg:matheuristic} to define $\Omega_F \subset \Omega$, corresponding to a subset of individual assignments more likely to build a good solution (according to the combinatorial bound);
\item[3.] solve the IP problem $\eqref{gspp:obj}$ corresponding only to the reduced model over $\Omega_F$.
\end{description}

We remark that, since the matheuristic approach gives up on the optimality certificate, it is not imperative to reach a null duality gap in the resulting model to end the algorithm.
Such a strategy might be interesting in the event of an application with challenging runtime requirements.
Obviously, even if one is able to find a best solution to the reduced model, its optimality in the actual problem instance is by no means implied.

Finally, we also note that, on the one hand, the approach could be used with any lower bound, such as computing LP relaxations.
On the other, we argue that Algorithm \ref{alg:matheuristic} is useful because of the efficient combinatorial bound, as opposed to the corresponding time to solve LP relaxations in the loop starting at line \ref{code:delta}. That is, instead of solving as many LP problems as $|\Omega|$ to determine the lower bounds $\Delta(k)$ implied by fixing each variable $y_k$, we reduce the computational effort to ($|\Omega|$ iterations of) the construction of graph $G_{2,k}$ and the solution of a weighted matching problem on a graph with just $|T|$ vertices.

In Section \ref{sec:results} we present the results of a successful computational experience with this algorithm, using benchmark instances from a maritime logistics application. Concerning the execution time matter, even in the most difficult set of instances, solutions within 4\% of the optimal value are provided, with an average of 30\% of the time required by methods recently introduced in the literature.
On the other hand, for the sake of illustration, even if it took 1 second to build and solve the corresponding LP relaxation model, a medium-size instance with 100.000 assignments would already take over a day to compute $\Omega_F$.

\section{Sample applications}
\label{sec:applications}

In this section we aim to give straightforward examples from the literature, in which relevant applications are formulated as different generalizations of a set partitioning problem. In each case, we give a brief problem definition, transcribe its IP formulation from the literature, and show at intuition level that they are amenable to the combinatorial constructions from Sections \ref{sec:relaxations} and \ref{sec:algorithms}.

The job scheduling problem in Section \ref{sec:app:scheduling} is in direct correspondence with the GSPP structure that we present in the Introduction.
As for the applications in crew disruption management (Section \ref{sec:app:trains}) and port logistics (Section \ref{sec:app:bacap}), we highlight the interesting possibility to translate application features into small extensions of the matching relaxation.

\subsection{Job scheduling in parallel machines}
\label{sec:app:scheduling}

\cite{LallaRuizScheduling2016} concern the following problem, denoted Parallel Machine Scheduling with Step Deteriorating Jobs.
Suppose that $m$ identical, parallel machines are available in a planning horizon $h$, and that $n$ jobs are to be scheduled, each consuming a processing time varying among two constants, as follows. If the assigned starting time is on a deteriorating date $d_i$ or before, the job takes a base processing time $a_i$; if it starts after that date, it requires processing time $a_i + b_i$. For each job $i = 1, \ldots, n$, constants $a_i$, $b_i$ and $d_i$ are input parameters.
The objective function is to minimize the sum of the completion times of all jobs.

The base formulation used by those authors corresponds precisely to the GSPP structure in $\eqref{gspp:obj}-\eqref{gspp:yLB}$, as we describe next.
The set of columns is also denoted by $\Omega$.
Each column corresponds to a variable $x_w$, representing a feasible assignment to an individual job. Thus, any assignment $w$ in this problem encodes: the particular machine in which it is processed, the starting and finishing times, and a fixed cost $c_w$. Any selection of individual assignments to different jobs is compatible if no machine is used at a same time slot by more than one job.

Lalla-Ruiz and Vo{\ss} also define the index set $P$ such that $|P| = m \times h$, corresponding to machine and time slot combinations; this corresponds to $k=2$ resources in our base structure.
The authors use binary coefficient matrices $A \in \mathbb{B}^{n \times |\Omega|}$ and $B \in \mathbb{B}^{|P|\times|\Omega|}$ defined likewise $\eqref{gspp:partition}$ and $\eqref{gspp:pack}$ to express partitioning and packing constraints.
Then, \cite{LallaRuizScheduling2016} present the following IP formulation:
\begin{alignat}{2}
\min & \ \ \sum_{w \in \Omega}{c_w x_w}  \label{sched:obj} \\
\text{subject to:} & \nonumber \\
& \sum_{w \in \Omega}{A_{iw} x_w = 1}  &  \hspace{2cm} i = 1, \ldots, n    \label{sched:partition} \\
& \sum_{w \in \Omega}{B_{pw} x_w \leq 1}  &  \forall p \in P   \label{sched:pack} \\
& x_{w} \in \{0, 1\}  &  \forall w \in \Omega \label{sched:xBIN}
\end{alignat}


We thus conclude this is trivially equivalent to the GSPP structure under investigation in this paper.
Note the 1:1 correspondence between variables $x_{w}$ above and $y_j$ in the base GSPP, between entries $(a_{iw})$ in $\eqref{sched:partition}$ and $(a_{ij})$ in $\eqref{gspp:partition}$, and between $(b_{pw})$ in $\eqref{sched:pack}$ and $(b_{rj})$ in $\eqref{gspp:pack}$.
Therefore, all among the combinatorial constructions of graph $G_2$ and the bound $LB_2$ (Theorem \ref{theorem:lb2}), the variable probing test (Proposition \ref{prop:probing}), and the matheuristic in Algorithm \ref{alg:matheuristic}, carry over to this problem as presented before.

The next applications are more \emph{interesting}, in the sense that the base GSPP structure provides only a relaxation of the formulated problem, making space for application-specific tweaks. Nevertheless, we highlight the fact that the base structure can, as it is, model such relevant topics as the job scheduling problem above.

\subsection{Crew disruption management}
\label{sec:app:trains}

\cite{Rezanova2010} investigate the train driver recovery problem, which aims to find the best assignment of tasks to replacement duties for train drivers, when the railway operator has to recover from a disruption. 
In the occasion of internal or external failures (\emph{e.g.} due to track conditions, accidents, or passanger delays), such that the slack time built into the timetable is not enough to restore the original plan, a recovery mission with the re-routing or cancelling of trains is performed. Dealing with the propagation of disruptions within the schedule makes the problem rather challenging for the operator.

The application itself includes a range of details which is beyond the scope of our discussion. 
While we limit the description below to the optimization of recovery duties, the elegant work of Rezanova and Ryan concerns several auxiliary issues around that central matter, and they evaluate their contributions in the context of real-life data from a Danish railway operator.

We start with a so-called \emph{disruption neighbourhood}: let $K$ denote a subset of train drivers whose duties include at least one disrupted train task within a given recovery period (which is our planning horizon). The remaining drivers keep their former duties, and are not included in the model. Also let $N$ denote the set of all tasks originally assigned to drivers in $K$ during the period, while $P^k$ denotes the set of acceptable recovery duties for each driver $k \in K$. In this model, each individual assignment (such a recovery duty $p \in P^k$) encodes a subset of tasks in $N$ and a fixed cost $c_p^k$, corresponding to a measure of its \emph{unattractiveness} to driver $k$.

The goal of the train driver recovery problem is to find a selection of individual assignments to each driver in $K$ (\emph{i.e.} a feasible recovery duty) of least total cost, such that each task in $N$ is covered exactly once.
To this end, Rezanova and Ryan use binary decision variables $x_p^k$, set to one iff duty $p \in P^k$ is chosen.
They also define a coefficient matrix analogous to $B$ in the packing constraints $\eqref{gspp:pack}$ of the GSPP structure:
let $A$, with $|N|$ lines and a column for each duty of each driver, be such that $a_{ip}^k$ is one if task $i$ is covered by duty $p \in P^k$, and zero otherwise.
Now we can repeat their formulation, in $\eqref{trains:obj}-\eqref{trains:xBIN}$ below.
\begin{alignat}{2}
\min & \ \ \sum_{k \in K} \sum_{p \in P^k} c_p^k x_p^k  \label{trains:obj} \\
\text{subject to:} & \nonumber \\
& \sum_{p \in P^k} x_p^k = 1  &  \hspace{2cm} \forall k \in K    \label{trains:partition}\\
& \sum_{k \in K} \sum_{p \in P^k} a_{ip}^k x_p^k = 1  &  \forall i \in N   \label{trains:pack} \\
& x_p^k \in \{0, 1\}  &  \forall p \in P^k, \forall k \in K \label{trains:xBIN}
\end{alignat}

Note that variables $y_j$ in the base GSPP (\emph{i.e.} the complete collection of assignments) correspond to expanding $x_p^k$ for all $k \in K$. Therefore, summing $x_p^k$ over the different $P^k$ in constraints $\eqref{trains:partition}$ corresponds to the sum over all the products $a_{ij} y_j$ in $\eqref{gspp:partition}$, where coefficients $(a_{ij})$ \emph{select} only the assignments to task $i$ in each equality.
Analogously, the double summation over each $k \in K$ and each $p \in P^k$ in $\eqref{trains:pack}$ corresponds to the sum of all assignments $y_j \in \Omega$ in $\eqref{gspp:pack}$, and the tasks in the driver recovery problem (elements in $N$) consist of the single \emph{resource} to be distributed among the assignments (\emph{i.e.} the analougous of set $R$ in GSPP is one-dimensional in this application).

Now, the transformation of this formulation into the GSPP structure we consider allows for two possible relaxations. In each case, we describe what is relaxed, what corresponds to the packing constraints, and compare the combinatorial constructions and bounds.

A first, more natural, approach would be to relax equality constraints for tasks, allowing arbitrary $i \in N$ to remain uncovered in the solution. That is, we replace $\eqref{trains:pack}$ by:
\begin{equation}
\sum_{k \in K} \sum_{p \in P^k} a_{ip}^k x_p^k \ \leq \ 1   \hspace{2cm}   \forall i \in N
\label{trains:relax1}
\end{equation}
Then, formulation $\eqref{trains:obj},\eqref{trains:partition},\eqref{trains:xBIN},\eqref{trains:relax1}$ is a relaxation of the one by Rezanova and Ryan, and it can also be viewed as an instance of the base GSPP structure with a single shared resource.
In this case, the new, combinatorial relaxation yielding the bound in Theorem \ref{theorem:lb2} consists of solving the weighted matching problem in a complete graph $G_2$, with a vertex for each driver, such that the cost of an edge $(i,j)$ corresponds to the cheapest, non-overlapping duties for the drivers $i$ and $j$. Here, \emph{non-overlapping} means only that no task is included in both duties of that pair of drivers. 
The bound thus sums up to pairing up drivers and finding cheapest pairwise compatible combinations; some tasks will likely remain uncovered, others might be included in duties of two or more (unpaired) drivers.

Considering the real world instances solved in the work of \cite{Rezanova2010}, we verify that the number of drivers $|K|$ and the number of tasks $|N|$ have the same order of magnitude.
Since there is a collection of possible duties \emph{for each} driver $k=1,\ldots,|K|$, and matrix $A$ includes a column corresponding to each duty, $A$ should have many more columns than lines.
Therefore, the improvement of the bound described above, compared to simply removing all constraints in $\eqref{trains:pack}$, could be negligible.
Interestingly, since the formulation has two partitioning constraints, one can conceive an alternative approach, relaxing instead the equality in constraints $\eqref{trains:partition}$ into:
\begin{equation}
\sum_{p \in P^k} x_p^k \leq 1   \hspace{2cm}   \forall k \in K  
\label{trains:relax2}
\end{equation}

We therefore regard the integer program $\eqref{trains:obj},\eqref{trains:pack},\eqref{trains:xBIN},\eqref{trains:relax2}$ as a new relaxation, which requires that all tasks are covered exactly once, while allowing some drivers to remain idle.
From this point, the new matching relaxation, attaining the bound from Theorem \ref{theorem:lb2} for this application, builds on a complete graph, with a vertex for each task. Now, the weight of an edge $(i,j)$ is determined by the cheapest pair of recovery duties covering both taks $i$ and $j$ exactly once, in which no driver is assigned two different duties.
That is, the compatibility oracle would return the cost of either: (i) a single duty for the same driver, including both tasks, or (ii) two duties, for different drivers, each covering one of the tasks but not the other.
We remark that imparting more application details on the oracle decision for pairwise-compatible assignments (\emph{i.e.} possibly tightening the gap towards an overall feasible solution) could increase the edge weights and, consequently, strengthen the matching bound.

\subsection{Port logistics}
\label{sec:app:bacap}

Our last sample application is also the subject matter of our preliminary computational evaluation, in the next section.
The Berth Allocation and Quay Crane Assignment Problem (BACAP) aims to allocate a berthing time, a position in the quay, and a number of quay cranes (QCs) for arriving vessels in a seaport container terminal. Feasible assignments in the BACAP need to fulfil requirements on desired berthing period and position, and an agreement on the use of QCs.
General reviews and a taxonomy to compare related work can be found in the surveys of \cite{Stahlbock2008}, \cite{survey2010}, and \cite{survey2015}.

We follow the presentation by \cite{iris2015}, who give a precise description of the application characteristics, 
and a clear classification of related  literature.
Let $V$ be the set of vessels, $T$ be the set of time slots in the planning horizon, $L$ be the set of berthing positions in the quay, and $K$ be the number of available QCs.
Also define the set of berthing time/position combinations $P = T \times L$; this corresponds to two shared resources in our base GSPP structure.
They define the set $\Omega$ as we have used throughout this paper: the complete set of feasible individual allocations. In this case, each element $j \in \Omega$ encondes a suitable space in the quay, time slots and a number of cranes to serve a given vessel, besides its cost $c_j$. This cost, in turn, depends on QC usage, deviations from parameters on the desired position on the quay and expected starting and finishing times for the service.

Binary decision variables $y \in \mathbb{B}^{|\Omega|}$ indicate which individual assignments are used in the solution.
Finally, the coefficient matrices are in accordance with those in our description of the base GSPP, as we describe next.
$A \in \mathbb{B}^{|V|\times|\Omega|}$ associates each column with a single vessel: $a_{ij}$ is equal to one if column $j$ refers to an assignment for vessel $i$; otherwise, it is equal to zero.
$B \in \mathbb{B}^{|P|\times|\Omega|}$ represents berths as combinations of time intervals and quay positions: $b_{pj}$ is one iff the given pair of (time, space) positions corresponding to $p \in P$ is used in the assignment $y_j$.
An element of $Q \in \mathbb{Z}^{|T|\times|\Omega|}$ determines how many QCs are used by $y_j$ in time period $t$. Then, the BACAP formulation described by \cite{iris2015} corresponds to:
\begin{alignat}{2}
\min & \ \ \sum_{j \in \Omega} c_j y_j  \label{bacap:obj} \\
\text{subject to:} & \nonumber \\
& \sum_{j \in \Omega} a_{ij} y_j = 1  &  \hspace{2cm} \forall i \in V    \label{bacap:partition}\\
& \sum_{j \in \Omega} b_{pj} y_j \leq 1 &  \forall p \in P   \label{bacap:pack}\\
& \sum_{j \in \Omega} q_{tj} y_j \leq K  & \forall t \in T   \label{bacap:qcs}\\
& y_j \in \{0, 1\}  &  \forall j \in \Omega \label{bacap:yBIN}
\end{alignat}

Set partition constraints $\eqref{bacap:partition}$ ensure that all vessels are served by exactly one assignment, while set packing in $\eqref{bacap:pack}$ forbid overlapping of vessel assignments in each single time/space slot.
Inequalities $\eqref{bacap:qcs}$ guarantee that QCs availability in the terminal is respected.


It is clear, then, that one need only relax the latter set of inequalities $\eqref{bacap:qcs}$ to view the formulation by Iris \emph{et al.} as an instance of the base GSPP structure we concern, with a two-dimensional analougous of the set $R$ of resource-tuples in this application (time slots and quay space combinations); the remainder of the above formulation is in 1:1 correspondence with the base GSPP.
The matching relaxation from Theorem \ref{theorem:lb2} can thus be determined on a complete graph, with a vertex for each vessel, where the cost of an edge amounts to the weight of a cheapest, compatible pair of assignments to the corresponding vessels.

Finally, we remark that two assignments for different vessels in the BACAP are denoted \emph{compatible} if they have no overlap in berthing time and space. Equivalently, representing the two assignments in a Cartesian plane (with time and space coordinates), they are compatible iff the corresponding rectangles do not intersect each other.
We can further tighten the definition of compatibility in this application by limiting the combined number of cranes used by two given assignments to the maximum available in the quay.
As in the case of the previous application, this stronger compatibility criterion might rule out pairs of assignments and increase edge weights in $G_2$, possibly strengthening the matching bound.

\section{A computational case study}
\label{sec:results}


The goals of the computational evaluation we present are twofold.
First, we want to compare the linear programming and the matching relaxations to evaluate the trade-off between the strength of the bound and the time to compute it.
Second, we seek to assess the quality of the solutions provided by the matheuristic we introduced.
Toward these ends, we have implemented a series of algorithms concerning the last application we described, the Berth Allocation and Quay Crane Assignment Problem (BACAP).
We consider the same benchmark instances used by \cite{meisel2009} and \cite{iris2015} to evaluate the efficiency of the proposed algorithms.
There are thirty instances of three different sizes, ten of each: 20 (small), 30 (medium) and 40 (large) vessels.

All algorithms were implemented in C++ using Gurobi solver version 6.5.
The number of execution threads allowed for Gurobi to solve each IP model was set to 2 in the case of instances with 20 or 30 vessels, while a single thread is allowed to solve instances with 40 vessels. Since the Gurobi solver needs copies of the complete model for each execution thread, we verified improved performance using this setting because no virtual memory is needed.
To compute maximum weighted matchings, with the blossom shrinking algorithm by \cite{Edmonds1965PTF}, we used the efficient implementation available in the open source Library for Efficient Modeling and Optimization in Networks (LEMON) \citep{lemon2011}. The time complexity of that implementation is $O(mn\log n)$ in the worst case, where $n$ is the number of graph vertices and $m$ is the number of edges.

All experiments were run in a machine with an Intel Core i7 4790K (4.00 GHz) CPU and 16GB of RAM.
The solver runtime used in the matheuristic experiments was limited to 1800 seconds.
The experiments corresponding to the solution approach proposed by \cite{iris2015} were not time limited.
It is important to highlight that all results we present concerning the work by those authors were evaluated with our own implementation of their algorithms. Moreover, the numbers concerning their results may have slight variations when compared to the original work, which we have concluded to be explained by numeric precision matters.


First, we compare the lower bounds and the runtime to compute the combinatorial relaxation, against the LP relaxation. Table \ref{tab:lb} presents the corresponding bounds in columns 3 and 4. Column 5 indicates the percentual difference of the combinatorial bound with respect to the LP relaxation one. Columns 6 and 7 present the execution time required by each method.



\begin{table}[ht!]
\centering
\setstretch{1.35}
\caption{Results attained with the combinatorial and the linear programming relaxations}
\label{tab:lb}
\begin{threeparttable}
\scriptsize
\begin{tabular}{ccccccccc}
\toprule
\multicolumn{2}{c}{Instance}  &&  \multicolumn{3}{c}{Lower Bounds}  &&  \multicolumn{2}{c}{Time (s)}  \\
\cmidrule{1-2}  \cmidrule{4-6}  \cmidrule{8-9}
\multirow{2}{0.5cm}{$|V|$} & \multirow{2}{0.5cm}{ID} && LP & Matching & Difference && LP & Matching \\
   &   &&  relaxation  &  relaxation  &  (\%)  &&  relaxation &  relaxation \\
\midrule
\multirow{10}{0.5cm}{20}    &   1   &&   885.0   &   698.0 &   21.1\%    &&   5.61    &   0.01    \\
                            &   2   &&   562.0   &   562.0 & \ 0.0\%     &&   0.16    &   0.00    \\
                            &   3   &&   816.5   &   646.0 & 20.9\%      &&   6.69    &   0.02    \\
                            &   4   &&   762.0   &   620.0 & 18.6\%      &&   7.37    &   0.02    \\
                            &   5   &&   592.0   &   516.0 & 12.8\%      &&   1.62    &   0.01    \\
                            &   6   &&   592.0   &   592.0 & \ 0.0\%     &&   1.23    &   0.00    \\
                            &   7   &&   722.0   &   646.0 & 10.5\%      &&   2.77    &   0.00    \\
                            &   8   &&   582.0   &   532.0 & 8.6\%       &&   4.02    &   0.00    \\
                            &   9   &&   782.0   &   620.0 & 20.7\%      &&   6.47    &   0.01    \\
                            &   10  &&   930.7   &   700.0 & 24.8\%      &&   9.34    &   0.02    \\
                            &       &&           &   $\overline{\text{LBgap}}_{I_{{20}}}$  & 13.8\%  &           &           \\
\midrule
\multirow{10}{0.5cm}{30}    &   11  &&   1,408.8 &   \ 922.0 &   34.6\%   &&   29.98   &   0.06    \\
                            &   12  &&   891.3   &   \ 800.0 &   10.2\%   &&   \ 5.23  &   0.01    \\
                            &   13  &&   1,091.7 &   \ 894.0 &   18.1\%   &&   \ 7.64  &   0.02    \\
                            &   14  &&   1,036.5 &   \ 880.0 &   15.1\%   &&   12.19   &   0.02    \\
                            &   15  &&   1,600.1 &   1,046.0 &   34.6\%   &&   36.58   &   0.09    \\
                            &   16  &&   1,137.0 &   1,008.0 &   11.3\%   &&   \ 7.56  &   0.01    \\
                            &   17  &&   1,084.0 &   \ 894.0 &   17.5\%   &&   \ 9.07  &   0.02    \\
                            &   18  &&   1,245.0 &   \ 860.0 &   30.9\%   &&   31.71   &   0.07    \\
                            &   19  &&   1,705.0 &   1,052.0 &   38.3\%   &&   44.07   &   0.08    \\
                            &   20  &&   1,354.4 &   1,008.0 &   25.6\%   &&   22.13   &   0.03    \\
                            &       &&           &   $\overline{\text{LBgap}}_{I_{{30}}}$  & 23.6\%     &&           &           \\
\midrule                                            
\multirow{10}{0.5cm}{40}    &   21  &&   2,058.8 &   1,150.0   &   44.1\%   &&   94.29   &   0.22    \\
                            &   22  &&   1,680.0 &   1,288.0   &   23.3\%   &&   \ 36.68 &   0.12    \\
                            &   23  &&   2,380.7 &   1,250.0   &   47.5\%   &&   102.59  &   0.28    \\
                            &   24  &&   2,727.0 &   1,544.0   &   43.4\%   &&   118.38  &   0.38    \\
                            &   25  &&   1,559.9 &   1,102.0   &   29.4\%   &&   \ 44.00 &   0.13    \\
                            &   26  &&   2,364.1 &   1,294.0   &   45.3\%   &&   \ 97.34 &   0.31    \\
                            &   27  &&   1,965.8 &   1,222.0   &   37.8\%   &&   \ 58.83 &   0.18    \\
                            &   28  &&   2,533.3 &   1,412.0   &   44.3\%   &&   122.40  &   0.41    \\
                            &   29  &&   2,071.8 &   1,404.0   &   32.2\%   &&   \ 65.54 &   0.16    \\
                            &   30  &&   1,872.5 &   1,240.0   &   33.8\%   &&   \ 73.21 &   0.21    \\
                            &       &&           &   $\overline{\text{LBgap}}_{I_{{40}}}$  & 38.1\%     &&           &           \\
\bottomrule
\end{tabular}
\begin{tablenotes}
      \scriptsize
      \item $\overline{\text{LBgap}}_{I_{{|V|}}}$ stands for the arithmetic mean of the percentual difference in the bounds for instances with $|V|$ vessels.
\end{tablenotes}
\end{threeparttable}
\end{table}

The combinatorial bound matches the LP relaxation value for two instances, but it is consistently weaker and the difference also grows with the input size.
Nevertheless, while building the model and solving its LP relaxation consumes significant time, building the graphs representing compatible assignment and solving the corresponding weighted matching problem is performed very fast, in comparison.

One can thus argue that, when choosing a lower bound to use in the variable reduction technique or in the matheuristic algorithm, the computational performance of the matching relaxation is the crucial factor.
Since a lower bound must be computed after probing each variable, and even the smallest instances have tens of thousands of variables, the runtime of the LP relaxation becomes prohibitive in this context.

Next, we executed different experiments to evaluate the matheuristic algorithm.
We stress again that the matheuristic approach waives the optimality certificate of a solution in the sake of a reduced computation time.
In the following, we verify this outcome and assess the strength of the solutions attained, in comparison with the exact approach in the literature.

\begin{sloppypar}
Several matheuristic parameter combinations were tested, with $\sigma \in \{0.0, 0.1, 0.2, 0.3\}$ and $\mu \in \{500, 1000, 1500, 2000\}$.
Preliminary evaluations with $\mu = 0$ and different choices for $\sigma$ had a poor performance, 
leading to such an extreme reduction that the resulting model was infeasible for at least one instance, \emph{i.e.} the corresponding polyhedron does not include a single integer point.
This is important to justify the final step in our algorithm (requiring at least $\mu$ variables corresponding to each vessel).
The four parameter combinations with $\sigma \in \{0.0, 0.1\}$ and $\mu \in \{500, 1000\}$ also led to an infeasible model for at least one instance.
The remaining combinations always found an integer feasible solution.
\end{sloppypar}

We report in Table~\ref{tab:quality} the results with the configuration $(\sigma=0.1, \mu=2000)$, which yields the least average GAP$_{\text{OPT}}$, between the best primal solutions and the known optima.
For each instance, we present 
the optimal solution value $z$, 
the percentage of variables remaining after the matheuristic filter ($\Omega_F$), 
followed by the cost $\bar{z}$ of the best solution found through the matheuristic and its GAP$_{\text{OPT}}$ (to the optimal value $z$).
In the next columns, we present the partial runtime $P_C$ of applying our combinatorial reduction technique from Proposition \ref{prop:probing}, and the partial runtime $P_I$ of applying the one by \cite{iris2015}. Then follows the respective total runtimes $T_{C}$ and $T_{I}$ (\emph{i.e.} the sum of the times for preprocessing and for solving the mathematical model). 
The last column presents the runtime improvement of our proposal  compared to the one from the literature: $E = (\nicefrac{T_C}{T_I}) \times 100$.

\begin{table}[ht!]
\setstretch{1.35}
\caption{Results concerning quality of solutions and runtime}
\label{tab:quality}
\makebox[\textwidth][c]{
\begin{threeparttable}
\scriptsize
\begin{tabular}{ccrrrcrrrrrrrr}
\toprule
\multicolumn{2}{c}{Instance}	& 	& \multicolumn{3}{c}{Matheuristic results} & & \multicolumn{5}{c}{Runtime} & & \multicolumn{1}{c}{Efficiency}	  \\

\cmidrule{1-2} \cmidrule{4-6} \cmidrule{8-12} \cmidrule{14-14}
	&	& 	& & 	& &	&	\multicolumn{2}{c}{Matheuristic}	& &	\multicolumn{2}{c}{\cite{iris2015} }	& 	&	\\

 \cmidrule{8-9}  \cmidrule{11-12} 
$|V|$	&	ID	& \multicolumn{1}{c}{$z$}	& 	\multicolumn{1}{c}{$\Omega_{F} (\%)$} & \multicolumn{1}{c}{$\bar{z}$}	&	$\text{Gap}_{\text{OPT}} ($\%$)$ &	&	$P_{C}$ (s)	&	$T_{C}$ (s)	& &	$P_{I}$ (s)	&	$T_{I}$ (s) 	& 	&	$E (\%)$	\\
\midrule
\multirow{10}{0.5cm}{20}	&	1	&	89.00	&	60.84	&	89.00	&	0.00	&	&	32.44	&	\textbf{56.65}	&	&	14.84	&	64.28	&	&	88.13	\\
							&	2	&	56.20	&	100.00	&	56.20	&	0.00	&	&	1.01	&	1.09	&	&	0.09	&	\textbf{0.16}	&	&	681.25	\\
							&	3	&	85.70	&	59.09	&	85.70	&	0.00	&	&	9.63	&	\textbf{73.19}	&	&	0.79	&	114.73	&	&	63.79	\\
							&	4	&	81.80	&	47.86	&	81.80	&	0.00	&	&	8.58	&	\textbf{39.24}	&	&	2.22	&	60.22	&	&	65.16	\\
							&	5	&	59.20	&	99.13	&	59.20	&	0.00	&	&	2.62	&	7.99	&	&	0.16	&	\textbf{7.59}	&	&	105.27	\\
							&	6	&	59.20	&	100.00	&	59.20	&	0.00	&	&	1.53	&	6.73	&	&	0.16	&	\textbf{5.09}	&	&	132.22	\\
							&	7	&	75.20	&	94.32	&	75.20	&	0.00	&	&	6.42	&	21.33	&	&	2.56	&	\textbf{19.92}	&	&	107.08	\\
							&	8	&	61.40	&	73.20	&	61.40	&	0.00	&	&	5.39	&	22.10	&	&	0.95	&	\textbf{21.41}	&	&	103.22	\\
							&	9	&	79.00	&	52.22	&	79.00	&	0.00	&	&	16.37	&	\textbf{45.45}	&	&	3.77	&	54.32	&	&	83.67	\\
							&	10	&	101.00	&	45.56	&	101.00	&	0.00	&	&	26.05	&	\textbf{76.27}	&	&	3.13	&	214.43	&	&	35.57	\\
	& & & & & & & & & & & $\overline{E_{20}}$ &   & 101.34\\
\midrule																								
\multirow{10}{0.5cm}{30}	&	11	&	143.20	&	31.43	&	143.20	&	0.00	&	&	60.66	&	\textbf{186.16}	&	&	7.64	&	421.91	&	&	44.12	\\
							&	12	&	92.00	&	87.55	&	92.00	&	0.00	&	&	13.59	&	43.93	&	&	0.85	&	\textbf{43.67}	&	&	100.60	\\
							&	13	&	110.00	&	66.45	&	110.00	&	0.00	&	&	24.53	&	\textbf{64.00}	&	&	3.01	&	85.90	&	&	74.51	\\
							&	14	&	107.40	&	50.85	&	107.40	&	0.00	&	&	28.50	&	\textbf{111.48}	&	&	6.73	&	164.24	&	&	67.88	\\
							&	15	&	168.40	&	26.53	&	168.40	&	0.00	&	&	70.46	&	\textbf{236.08}	&	&	10.64	&	665.77	&	&	35.46	\\
							&	16	&	121.60	&	72.09	&	121.60	&	0.00	&	&	57.76	&	\textbf{144.80}	&	&	22.03	&	153.33	&	&	94.44	\\
							&	17	&	109.40	&	61.39	&	109.40	&	0.00	&	&	28.97	&	\textbf{73.98}	&	&	4.30	&	85.56	&	&	86.47	\\
							&	18	&	135.00	&	28.24	&	135.00	&	0.00	&	&	56.29	&	\textbf{253.90}	&	&	10.72	&	651.13	&	&	38.99	\\
							&	19	&	176.20	&	23.18	&	176.20	&	0.00	&	&	86.51	&	\textbf{252.40}	&	&	27.41	&	737.74	&	&	34.21	\\
							&	20	&	139.80	&	38.27	&	139.80	&	0.00	&	&	70.84	&	\textbf{204.72}	&	&	28.60	&	346.50	&	&	59.08	\\
		& & & & & & & & & & & $\overline{E_{30}}$ &   & 59.01\\
\midrule																								
\multirow{10}{0.5cm}{40}	&	21	&	246.80	&	16.20	&	252.60	&	2.30	&	&	235.29	&	\textbf{2035.52}	&	&	20.90	&	9237.00	&	&	22.04	\\
							&	22	&	178.40	&	30.14	&	178.40	&	0.00	&	&	171.43	&	\textbf{449.33}	&	&	45.42	&	694.91	&	&	64.66	\\
							&	23	&	266.30	&	15.39	&	266.30	&	0.00	&	&	322.13	&	\textbf{1771.94}	&	&	131.24	&	14516.88	&	&	12.21	\\
							&	24	&	307.00	&	15.39	&	318.20	&	3.52	&	&	1788.03	&	\textbf{3588.11}	&	&	538.36	&	11071.74	&	&	32.41	\\
							&	25	&	164.60	&	28.63	&	164.60	&	0.00	&	&	105.97	&	\textbf{457.75}	&	&	8.21	&	1049.86	&	&	43.60	\\
							&	26	&	258.20	&	16.31	&	261.90	&	1.41	&	&	274.46	&	\textbf{2074.55}	&	&	89.55	&	4337.52	&	&	47.83	\\
							&	27	&	205.40	&	19.92	&	205.40	&	0.00	&	&	171.92	&	\textbf{454.28}	&	&	17.81	&	1520.09	&	&	29.89	\\
							&	28	&	294.20	&	16.04	&	301.30	&	2.36	&	&	518.37	&	\textbf{2318.46}	&	&	66.65	&	42988.26	&	&	5.39	\\
							&	29	&	227.60	&	19.96	&	229.70	&	0.91	&	&	680.40	&	\textbf{2480.52}	&	&	567.79	&	2563.04	&	&	96.78	\\
							&	30	&	210.10	&	18.57	&	210.10	&	0.00	&	&	223.22	&	\textbf{1470.65}	&	&	46.59	&	4354.89	&	&	33.77	\\
	& & & & & & & & & & & $\overline{E_{40}}$ &   & 30.14\\
\bottomrule
\end{tabular}

	\begin{tablenotes}
      \scriptsize
      \item $\Omega_{F} = (\nicefrac{\Omega_M}{\Omega_1}) \times 100$, where $\Omega_M$ denotes the number of variables remaining after the matheuristic filter and $\Omega_1$ denotes the number of variables after the enumeration and preprocessing phases; $E = (\nicefrac{T_C}{T_I}) \times 100$; $\overline{E_{|V|}}$ corresponds to the geometric mean of $E$ for instances with $|V|$ vessels.
    \end{tablenotes}
\end{threeparttable}
}
\end{table}

It can be seen from these results that the proposed methodology is able to find the known optimal solution in 83\% of instances, while the gap is below 4\% for those solutions which are not optimal.
Note that the runtime efficiency of our methodology is inferior to the literature in half of the instances with 20 vessels and in one of the medium instances.
Nevertheless, these results do not have a significant impact because the corresponding execution times are at most 44 seconds, and the time difference between the methods does not exceed 2 seconds. 
For example, while instance~2 is solved instantaneously by the technique from the literature, it takes $1.09$ seconds using the matheuristic. Therefore, our proposal looses on the average efficiency for small instances.

On the other hand, observing the results for medium and large instances in the benchmark, it is clear that the matheuristic times are significantly smaller than those from the exact method presented by \cite{iris2015}.
The known optimal solution is found in all instances with 30 vessels, spending on average 59\% of the time spent by the exact method.
For more difficult instances, with 40 vessels, the known optimal solution is attained in half of the cases, while very good solutions can be achieved using about 30\% of the time consumed by the baseline method.

Considering instance~23, for example, the optimal solution is obtained 8 times faster by the matheuristic algorithm. In instance~28, the matheuristic consumes less than 6\% of literature algorithm runtime to find a solution within less than 3\% of the optimal value.
These results suggest that ranking the variables with the combinatorial bound is an effective selection criterion: the model size decreases drastically, the runtime reduces consistently in instances with more challenging dimensions, while keeping solutions within 4\% of the optimum.

\section{Concluding remarks}
\label{sec:conclusion}
We investigate in this work two discrete relaxations for the class of combinatorial optimization problems which can be described by a set partitioning formulation under packing constraints.
We introduce two lower bounds to this class of generalized set partitioning problems (GSPP), based on computing weighted matchings on suitable graphs. We prove that one of these bounds is stronger than the other and, on the algorithmic side, use the relaxations as the basis of a variable reduction technique and of a matheuristic method.

We seek to claim attention to the relevance of the GSPP structure as interesting in its own right, and indicate that it can be used to model different applications in the recent literature. For instance, we apply that base GSPP and our constructions to three sample applications concerning job scheduling, crew disruption management, and maritime logistics. In the particular case of the latter application, we have conducted a computational case study, and the \emph{matching relaxation setup} has proven useful to reach the known optimal solutions on 83\% of the benchmark instances of the problem; otherwise, the solutions were within 4\% of the optimal value, while using an average of 30\% of the time required by methods recently introduced in the literature.

It is worth remarking that the purely combinatorial constructions that we devised may be extended to include application-specific features, as we illustrate with two possible relaxations in the case of the train driver recovery problem, and with the stronger compatibility criterion depicted in the port logistics problem.

There are several tracks for future work on this problem. First, the subproblem of determining the best compatible assignments for two tasks is critical in the computation time of the combinatorial bound and, therefore, the variable reduction technique. One could investigate whether specialized data structures would allow designing a more efficient implementation. We also suggest the statistical study of matheuristic variants, using different criteria for choosing its parameters.
Finally, it could be interesting to study alternative relaxations of the base GSPP structure, \emph{e.g.} comparing combinatorial and Lagrangean bounds, or even extending the relax-and-cut approach of \cite{CavalcanteCidAbilio2008} to the GSPP we explore.

\bibliographystyle{plainnat-no-doi}

\end{document}